\theoremstyle{definition}
\newtheorem{theorem}{Theorem}
\newcommand{\nth}[1]{{#1}^{\text{th}}}
\newcommand{\NBS}[0]{N_{\mathrm{\scalebox{0.6} {BS} }}}
\newcommand{\RD}[0]{r_{\mathrm{\scalebox{0.6} {RD} }}}
\newcommand{\NRF}[0]{N_{\mathrm{\scalebox{0.6} {RF}}}}
\newcommand{\rf}[0]{r_{\mathrm{\scalebox{0.6} {F}}}}
\newcommand{\BD}[0]{r_{\mathrm{\scalebox{0.6} {BD} }}}
\newcommand{\EBRD}[0]{r_{\mathrm{\scalebox{0.6} {EBRD} }}}
\newcommand{\bl}[0]{\mathbf{b}_\mathrm{\scalebox{0.6} {ULA}}}
\newcommand{\br}[0]{\mathbf{b}_\mathrm{\scalebox{0.6} {URA}}}
\newcommand{\GR}[0]{\mathcal{G}_\mathrm{\scalebox{0.5} {URA}}}
\newcommand{\MUE}[0]{M_{\mathrm{UE}} }
\begin{document}
\title{Analyzing URA Geometry for Enhanced Spatial Multiplexing and Extended Near-Field Coverage 
\author{Ahmed Hussain, Asmaa Abdallah, Abdulkadir Celik, and Ahmed M. Eltawil,
\\ Computer, Electrical, and Mathematical Sciences and Engineering (CEMSE) Division,
\\King Abdullah University of Science and Technology (KAUST), Thuwal, 23955-6900, KSA }
}

\maketitle

\begin{abstract}

With the deployment of large antenna arrays at high-frequency bands, future wireless communication systems are likely to operate in the radiative near-field. Unlike far-field beam steering, near-field beams can be focused within a spatial region of finite depth, enabling spatial multiplexing in both the angular and range dimensions. This paper derives the beamdepth for a generalized uniform rectangular array (URA) and investigates how array geometry influences the near-field beamdepth and the limits where near-field beamfocusing is achievable. To characterize the near-field boundary in terms of beamfocusing and spatial multiplexing gains, we define the effective beamfocusing Rayleigh distance (EBRD) for a generalized URA. Our analysis reveals that while a square URA achieves the narrowest beamdepth, the EBRD is maximized for a wide or tall URA. However, despite its narrow beamdepth, a square URA may experience a reduction in multiuser sum rate due to its severely constrained EBRD. Simulation results confirm that a wide or tall URA achieves a sum rate of
$3.5 \times$ more than that of 
a square URA, benefiting from the extended EBRD and improved spatial multiplexing capabilities.

\end{abstract}

\begin{IEEEkeywords}
radiative near-field, beamdepth, rectangular arrays, beamfocusing, effective beamfocusing Rayleigh distance 
\end{IEEEkeywords}


\section{Introduction}
Massive \ac{MIMO} has been a cornerstone of \ac{5G} advancements, revolutionizing wireless communication through unprecedented spatial multiplexing gains \cite{saad2019vision}. As wireless networks continue to evolve, next-generation systems are poised to embrace \ac{UM}-\ac{MIMO}, leveraging even larger antenna arrays and expanding into higher frequency spectra to meet the growing demands for capacity and efficiency \cite{Jiang2021Survey}. 
At higher carrier frequencies, the reduced wavelength enables the deployment of massive antenna arrays within a confined space. The electromagnetic field surrounding an antenna array is generally divided into three zones: the reactive near-field, the radiative near-field, and the far-field. Extending from the reactive near-field up to the Rayleigh distance, the radiative near-field (also called the Fresnel region) is dominated by spherical wave propagation. Notably, the Rayleigh distance, which marks the boundary between the radiative near-field and the far-field, increases significantly with a larger array aperture and a shorter wavelength \cite{cui2022near}. As a result, upcoming wireless technologies are expected to operate predominantly in the radiative near-field. 

Recent studies highlight that considering near-field propagation effects can markedly boost communication capacity \cite{10273772}. As wireless systems transition from the far-field to the near-field regime, MIMO technology experiences a significant enhancement in spatial multiplexing gains \cite{miller2019waves}. Furthermore, the near-field \ac{LoS} channel, influenced by spherical wave propagation, can foster a rich scattering environment, thereby increasing single-user capacity. Although spherical wavefront propagation contributes to capacity improvements, an in-depth analysis is crucial to better understand the distinct electromagnetic properties of the near-field and further optimize communication performance. 

\begin{figure}[t]
\centering
\includegraphics[width = 0.95\columnwidth]{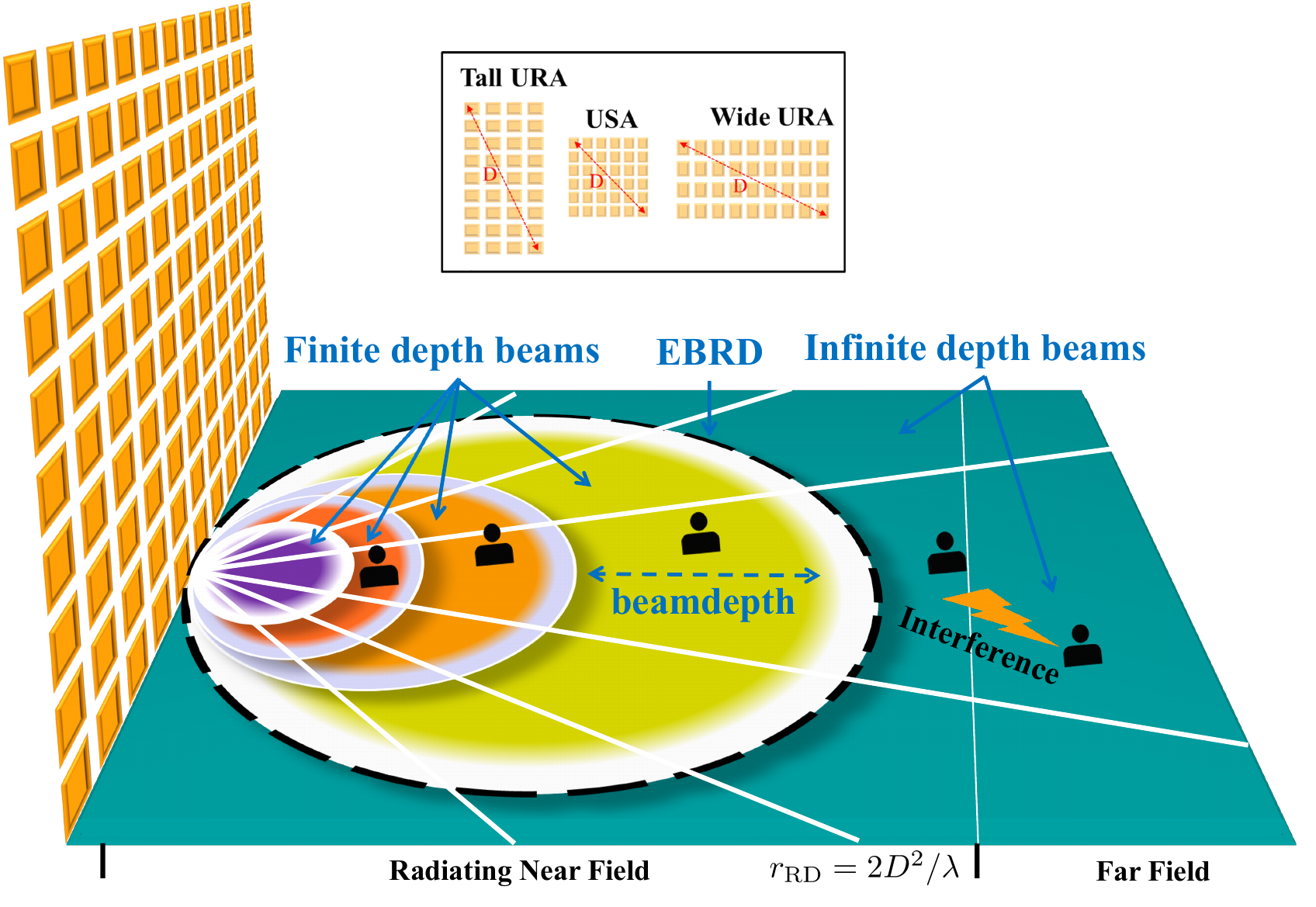}
\caption{Near-field beams with finite depth are achieved within EBRD of near-field region.}
\label{fig1_motivation}
\end{figure}
 
In the far-field, RF beams are characterized by a limited angular beamwidth while possessing an infinite beamdepth. Conversely, as depicted in Fig. \ref{fig1_motivation}, near-field beams can achieve a finite beamdepth, allowing beamfocusing at designated regions. This focusing ability requires compensating for phase shifts arising from the varying distances between antenna elements and the targeted focal point, leading to spherical equiphase surfaces converging at the intended location. The presence of a finite beamdepth also enables the formation of multiple beams at different ranges within the same angular direction, improving spatial multiplexing. Notably, the \ac{LDMA} approach proposed in \cite{wu2023multiple} has demonstrated superior spectral efficiency compared to \ac{SDMA}.

Primal work on near-field beamforming \cite{bjornson2021primer}, derived beamdepth expressions for broadside square aperture antennas, which were later extended to rectangular array configurations in \cite{10443535}. To better characterize wideband near-field channels, the work in \cite{10541333} introduced the \ac{ERD} metric, which quantifies far-field beamforming loss within the near-field domain. This concept was further refined for wideband phased arrays using the \ac{BAND} framework \cite{deshpande2022wideband}. Additionally, for a given aperture size, \acp{UCA} encompass a larger near-field region than \acp{ULA} due to their rotational symmetry, though they exhibit limitations in broadside focusing.

The conventional Rayleigh distance tends to overestimate the actual boundaries of the near-field region \cite{hussain2024near,hussain2025nearBT}. For example, alternative metrics like \ac{ERD} and \ac{BAND} cover only a small region of the near-field area defined by the Rayleigh distance. However, since these measures are based on far-field beamforming loss, they do not fully capture essential near-field characteristics such as beamfocusing, polar-domain sparsity, and spatial \ac{DoF}. To address these shortcomings, the \ac{EBRD} was introduced, as depicted in Fig. \ref{fig1_motivation}, to better define beamfocusing limits and polar-domain sparsity for a \ac{ULA} \cite{10988573}.

Recent studies distinguish between \ac{CAP} antennas, which feature continuous radiating surfaces, and \ac{SPD} antennas, composed of discrete radiating elements \cite{liu2023near}. While \ac{CAP} antennas generally produce well-formed radiation patterns, \ac{SPD} antennas—especially in large arrays—tend to generate stronger sidelobes and increased interference. Notably, the beamdepth analyses in \cite{bjornson2021primer, 10443535} primarily focus on \ac{CAP} antennas, assuming a fixed Rayleigh distance irrespective of the array aperture length. However, for \ac{SPD} antennas, the Rayleigh distance varies in proportion to the aperture length. Additionally, most prior research emphasizes broadside transmission, underscoring the need for angle-dependent studies on beamdepth and near-field beamfocusing for \ac{SPD} antennas.

Given these considerations, further research is required to identify array geometries that achieve narrow beamdepth while extending the \ac{EBRD}, enabling more users to operate within the effective near-field region. This work addresses these gaps by analyzing near-field beamfocusing for an \ac{SPD} \ac{URA} with a variable width-to-height ratio. Specifically, we derive the angle-dependent beamdepth, establish beamdepth limits referred to as the \ac{EBRD} for a generalized \ac{URA}, and analyze the impact of array geometry on these characteristics. In contrast to \cite{10443535}, which assumes a fixed Rayleigh distance, we model it as a function of the array aperture and derive angle-dependent formulations for near-field beamfocusing. Our analysis indicates that while a square URA yields the narrowest beamdepth, the EBRD is maximized with a wide or tall URA. However, despite its narrower beamdepth, a square URA may suffer from a reduced multiuser sum rate due to its significantly constrained EBRD. Simulation results demonstrate that a wide or tall URA achieves a sum rate of $\unit[8.2]{bps/Hz}$, compared to only $\unit[2.3]{bps/Hz}$ for a square URA, owing to its extended EBRD and enhanced spatial multiplexing capabilities.

The rest of the paper is organized as follows: Section \ref{Sec_System_Model} introduces the near-field channel model for a generalized \ac{URA}. Section \ref{sec-3} focuses on deriving the beamdepth and \ac{EBRD}, along with an analysis of how array geometry influences these metrics. Section \ref{sec-4} presents simulation results, while Section \ref{sec-5} concludes the paper.


\section{System Model} \label{Sec_System_Model}
We consider a narrowband communication system featuring a \ac{UM}-\ac{MIMO} antenna array at the \ac{BS} and a single-antenna \ac{UE}, as illustrated in Fig. \ref{fig1_system_model}. The \ac{BS} employs a \ac{URA} comprising $N_\text{BS}$ antenna elements, arranged in a grid with $N_1$ elements along the y-axis and $N_2$ along the z-axis. The antenna elements are uniformly spaced in both directions, with a separation of $d_y = d_z = d = \frac{\lambda}{2}$. The width-to-height ratio of the array is given by $\eta = \frac{N_1}{N_2}$. 
The indexing for antenna elements along the y-axis is defined as $n_1 \in [-\overset{\sim}{N}_{1}, \dots, 0, \dots, \overset{\sim}{N}_{1}]$, where $\overset{\sim}{N}_{1} = \left \lceil \frac{N_1-1}{2} \right\rceil$. Similarly, the elements along the z-axis follow the indexing $n_2 \in [-\overset{\sim}{N}_{2}, \dots, 0, \dots, \overset{\sim}{N}_{2}]$, with $\overset{\sim}{N}_{2} = \left \lceil \frac{N_2-1}{2} \right\rceil$. Using the Pythagorean theorem, the total aperture length of the \ac{URA} is determined as $D = d \sqrt{N_1^2 + N_2^2}$. The Rayleigh distance is defined as $\RD = \frac{2D^2}{\lambda}$ \cite{hussain2025near}. The \ac{UE} is positioned at a distance $r$ from the \ac{BS}, forming an azimuth angle $\varphi$ and an elevation angle $\theta$. To ensure minimal amplitude variation across the array, the minimum distance within the radiative near-field is set to $1.2D$. This allows the use of a \ac{USW} model for subsequent analysis \cite{liu2023near}.

\begin{figure}[t]
\centering
\includegraphics[width = 0.9\linewidth]{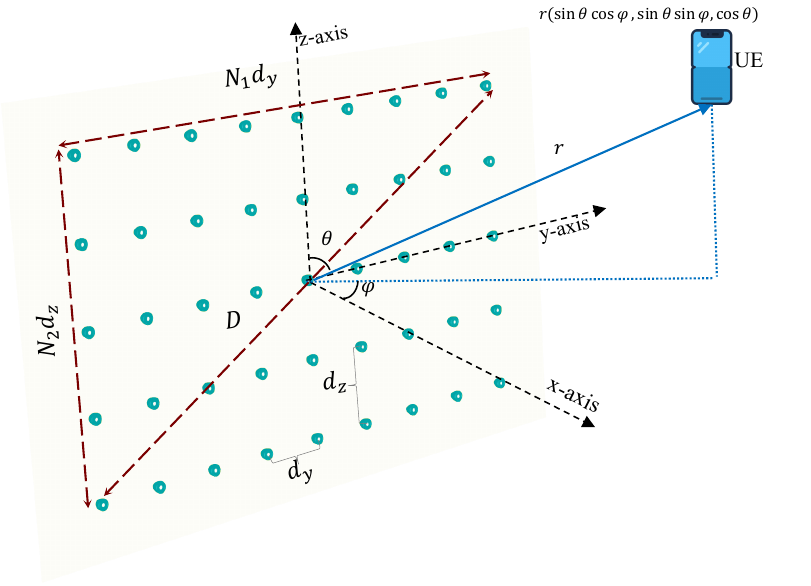}
\caption{Near-field channel model featuring UM-MIMO \ac{URA} array at the BS and single antenna UE.}
\label{fig1_system_model}
\end{figure}
The near-field channel $\mathbf{h} \in \mathcal{C}^{\NBS\times 1}$ based on the \ac{USW} model is mathematically formulated as \cite{sherman1962properties}
\begin{equation} 
\mathbf{h} = \sqrt{\frac{\NBS}{L}}\sum_{l=0}^{L}g_le^{-j{\nu}r_l}\mathbf{b}(\varphi_l,\theta_l,r_l),
\label{eqn_B1}
\end{equation}
here, $\nu = \frac{2\pi{f}}{c}$ represents the wavenumber, where $f$ denotes the carrier frequency. The term $g_l$ corresponds to the complex gain associated with the $\nth{l}$ propagation path among a total of $L$ paths, while $r_l$ represents the distance between the $\nth{l}$ scatterer and the \ac{URA}. Furthermore, $\mathbf{b}(\varphi_l, \theta_l, r_l)$ denotes the near-field steering vector, which directs energy within the spatial region specified by the angles $(\varphi_l, \theta_l)$ and the distance $r_l$. The near-field array response vector for a \ac{ULA} consisting of $N_1$ antenna elements is expressed as
\small
\begin{equation} 
\bl(\varphi_l,r_l) = \tfrac{1}{\sqrt{N_1}}\Big[e^{-j\nu(r_l^{(-\overset{\sim}{N}_{1})} -r_l)},\dots, e^{-j\nu(r_l^{(\overset{\sim}{N}_{1})} -r_l)}\Big],
\label{eqn_B2}
\end{equation}
\normalsize
where $r_{l}^{(n)}$ represents the distance between the $\nth{l}$ scatterer and the $\nth{n}$ antenna element. For a \ac{URA} with $\NBS = N_1 N_2$ elements, the near-field beamfocusing vector can be derived using the spherical-wave propagation model as
\small
\begin{equation}
\begin{aligned}
\br(\varphi_l,\theta_l,r_l)
=\frac{1}{\sqrt{\NBS}}\left[e^{-j {
\nu}(r_l^{(-\tilde{\boldsymbol{\zeta}})}-r_l)}, \cdots, e^{-j {
\nu}(r_l^{(\tilde{\boldsymbol{\zeta}})}-r_l)}\right],
\label{eqn_B3}
\end{aligned}
\end{equation}
\normalsize
where $\widetilde{\boldsymbol{\zeta}} = (\overset{\sim}{N_1}, \overset{\sim}{N_2})$. The distance between the $\nth{l}$ scatterer and the $\nth{(n_1, n_2)}$ element of the \ac{URA} is 
\small
\begin{equation}
\begin{aligned}
r_{l}^{\left(n_{1}, n_{2}\right)} & =\sqrt{\left(r_l u_x-0 \right)^{2}+\left(r_l u_y - n_1 d\right)^{2}+\left(r_l u_z - n_2 d\right)^{2}} \\
& \stackrel{(a)}{\approx} r_l - n_1 d u_y - n_2 d u_z + \frac{n_1^{2} d^{2}}{2r_l} \beta_1 \cdots \\ & + \frac{n_2^{2} d^{2}}{2r_l} \beta_2 - \frac{n_1 n_2 d^{2} u_y u_z}{r_l}, 
\label{eqn_B4}
\end{aligned}
\end{equation}\normalsize
where the directional cosines are given by: $u_x = \sin \theta \cos \varphi$, $u_y = \sin \theta \sin \varphi$, and $u_z = \cos \theta$. The approximation (a), also known as the \textit{near-field expansion}, is derived from the second-order Taylor series expansion of $\sqrt{1+x} \approx 1 + \frac{x}{2} - \frac{1}{8}x^2$ for small $x$. Moreover, we have $\beta_1 = 1 - \sin^2 \theta \sin^2 \varphi$ and $\beta_2 = 1 - \cos^2 \theta$. We neglect the last term of \eqref{eqn_B4} in the following analysis, as it becomes negligible for large $\NBS$ values \cite{10403506}, contributing only around $5\%$ to the array gain \cite{wu2023multiple}.

Finite-depth beamforming in the radiative near-field can be realized through the conjugate phase approach. This method involves adjusting the phase of each radiating element to counteract the phase delays between the focal point and individual antenna elements, thereby achieving constructive interference at the desired location. The generated beam is characterized by its azimuth beamwidth $\varphi_{\mathrm{\scalebox{0.5}{BW}}}$, elevation beamwidth $\theta_{\mathrm{\scalebox{0.5}{BW}}}$, and beamdepth $\BD$.

We consider a \ac{UM}-\ac{MIMO} \ac{BS} equipped with a generalized \ac{URA} that serves a near-field \ac{LoS} user located at $(\varphi, \theta, \rf)$. Let $\mathbf{w}(\varphi, \theta, \rf)$ represent the beamforming vector based on the conjugate phase method. The array gain $\GR$, which characterizes the normalized received power, is defined as
\begin{equation}
\begin{aligned}
&\GR = \left| \mathbf{w}\left(\varphi,\theta,z \right)^\mathsf{H} \br\left(\varphi,\theta,\rf\right) \right|^2, \\
&\forall \ \varphi \in [-\pi/2 ,\ \pi/2],\ \theta \in [-\pi/2 ,\ \pi/2], \ z \in [1.2D ,\ \infty]. 
\label{eqn_III}
\end{aligned}
\end{equation}

In the following analysis, we apply \eqref{eqn_III} to derive the array gain for the \ac{URA} and to characterize the beamdepth and \ac{EBRD} for various array geometries.


\section{Near-field Beamforming for URA} \label{sec-3}
In this section, we derive the beamdepth and the maximum range limits, known as the \ac{EBRD}, within which finite-depth beamforming is feasible for a \ac{URA}. We also explore how the spatial focus region and array geometry influence the variation in beamdepth and the extent of the \ac{EBRD}.

\subsection{Finite Beamdepth and EBRD} \label{sec-3a}
We define the beamdepth $\BD$, as the distance interval $z \in [\rf^\mathrm{min}, \rf^\mathrm{max}]$ within which the normalized array gain is at most $\unit[3]{dB}$ below its peak value. The beamforming vector $\mathbf{w}(\varphi, \theta, z)$ in (\ref{eqn_III}), created using the conjugate phase method, reaches its maximum array gain at the focal point $z=\rf$ and diminishes as the distance moves away from $\rf$. The normalized array gain achieved by $\mathbf{w}(\varphi, \theta, z)$ is given by
\begin{equation}
\GR \approx \left|\frac{1}{N_1 N_2} \sum_{n_{1}} \sum_{n_{2}} e^{j 
\frac{\pi}{\lambda}\left(n_{1}^{2} d^{2} \beta_1 + n_{2}^{2} d^{2} \beta_2 \right)
 z_\mathrm{eff} }\right|^2, 
\label{eqn_IIIB_1}
\end{equation}
where $ z_\mathrm{eff} = \left| \frac{z-\rf}{z \rf } \right| $. Given that $\NBS$ is large in \ac{UM}-\ac{MIMO}, the summation in \eqref{eqn_IIIB_1} can be approximated by a Riemann integral, yielding
\small
\begin{equation}
\begin{aligned}
\GR \approx \left|\frac{1}{N_1 N_2} \int_{-\tfrac{N_1}{2}}^{\tfrac{N_1}{2}} \int_{-\tfrac{N_2}{2}}^{\tfrac{N_2}{2}} e^{j 
\frac{\pi}{\lambda}\left(n_{1}^{2} d^{2} \beta_1 + n_{2}^{2} d^{2} \beta_2 \right)
 z_\mathrm{eff} } \mathrm{~d} n_{1} \mathrm{~d} n_{2} \right|^2.
 \label{eqn_IIIB_2}
\end{aligned}
\end{equation}
\normalsize
By changing variables, let $x_1 = \sqrt{\frac{2n_{1}^{2} d^{2} \beta_1 z_\mathrm{eff}}{\lambda}}$ and $x_2 = \sqrt{\frac{2n_{2}^{2} d^{2} \beta_2 z_\mathrm{eff}}{\lambda}}$, we can rewrite the array gain in \eqref{eqn_IIIB_2} in terms of Fresnel functions as:
\begin{equation}
\GR = \frac{\left[C^2(\gamma_1) + S^2(\gamma_1) \right]\left[C^2(\gamma_2) + S^2(\gamma_2)\right]}{(\gamma_1 \gamma_2)^2},
\label{eqn_IIIB_3}
\end{equation}
where $C(\gamma_i) = \int_{0}^{\gamma_i} \cos{\left(\frac{\pi}{2}x_i^2\right)}dx_i$ and $S(\gamma_i) = \int_{0}^{\gamma_i} \sin{\left(\frac{\pi}{2}x_i^2\right)}dx_i$ are the Fresnel functions, with $i \in \{1, 2\}$. Moreover, $\gamma_{1}=\sqrt{\frac{N_{1}^{2} d^{2} \beta_1 }{2\lambda}z_\mathrm{eff}}$ and $\gamma_{2}=\sqrt{\frac{N_{2}^{2} d^{2}\beta_2}{2\lambda}z_\mathrm{eff}}$.

\begin{theorem} 
For a generalized \ac{URA}, the $\unit[3]{dB}$ beamdepth $\BD$, achieved by focusing a beam at a distance $\rf$ from the \ac{BS}, is given by:
\begin{equation} 
\BD = \frac{8\rf^2 {\RD} \alpha_\mathrm{\scalebox{0.6}{3dB}} \ \eta (\eta^2+1)\sqrt{\beta_1 \beta_2}} {\left[\eta \RD \sqrt{\beta_1 \beta_2}\right]^2 -\left[4\alpha_\mathrm{\scalebox{0.6}{3dB}} \rf (\eta^2+1)\right]^2}.
\label{eqn_IIIB_4}
\end{equation}
\label{theorem1}
\end{theorem}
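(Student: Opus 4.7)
The plan is to start from the Fresnel-function expression for $\GR$ in \eqref{eqn_IIIB_3} and impose the $\unit[3]{dB}$ condition $\GR=1/2$ in order to locate the two endpoints $\rf^{\mathrm{min}}$ and $\rf^{\mathrm{max}}$ of the beamdepth interval. Denoting $G(\gamma):=[C^2(\gamma)+S^2(\gamma)]/\gamma^2$, the gain factorizes as $\GR=G(\gamma_1)\,G(\gamma_2)$, where $\gamma_1$ and $\gamma_2$ share the same linear dependence on $z_\mathrm{eff}$. Because the geometry fixes the ratio $\gamma_1/\gamma_2=\eta\sqrt{\beta_1/\beta_2}$, the level set $\GR=1/2$ is a single point on the one-parameter curve traced by $z_\mathrm{eff}$, and it is convenient to characterize it by the product $\gamma_1\gamma_2=\alpha_\mathrm{\scalebox{0.6}{3dB}}$, where $\alpha_\mathrm{\scalebox{0.6}{3dB}}$ is the constant defined implicitly by $G(\gamma_1)G(\gamma_2)=1/2$ on that ratio line.

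Next, I would substitute the definitions of $\gamma_1$ and $\gamma_2$ into the product to obtain
\begin{equation*}
\gamma_1\gamma_2 \;=\; \frac{N_1 N_2 d^2 \sqrt{\beta_1\beta_2}}{2\lambda}\,z_\mathrm{eff},
\end{equation*}
and then apply $N_1=\eta N_2$ together with $\RD=2d^2(N_1^2+N_2^2)/\lambda=2d^2 N_2^2(\eta^2+1)/\lambda$, which collapses the right-hand side to $\eta\sqrt{\beta_1\beta_2}\,\RD\,z_\mathrm{eff}/[4(\eta^2+1)]$. Setting this equal to $\alpha_\mathrm{\scalebox{0.6}{3dB}}$ yields $z_\mathrm{eff}^\star=4\alpha_\mathrm{\scalebox{0.6}{3dB}}(\eta^2+1)/[\eta\sqrt{\beta_1\beta_2}\,\RD]$. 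To convert $z_\mathrm{eff}^\star$ into physical distances I would invert $|z-\rf|/(z\rf)=z_\mathrm{eff}^\star$ on the two sides of $\rf$, which gives $\rf^{\mathrm{min}}=\rf/(1+\rf z_\mathrm{eff}^\star)$ and $\rf^{\mathrm{max}}=\rf/(1-\rf z_\mathrm{eff}^\star)$; the latter is meaningful exactly when $\rf z_\mathrm{eff}^\star<1$, which is precisely the EBRD admissibility condition. Subtracting,
\begin{equation*}
\BD \;=\; \rf^{\mathrm{max}}-\rf^{\mathrm{min}} \;=\; \frac{2\rf^2\, z_\mathrm{eff}^\star}{1-(\rf z_\mathrm{eff}^\star)^2},
\end{equation*}
and multiplying numerator and denominator by $(\eta\sqrt{\beta_1\beta_2}\,\RD)^2$ reproduces exactly \eqref{eqn_IIIB_4}.

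The principal subtlety is the very first step: reducing the product condition $G(\gamma_1)G(\gamma_2)=1/2$ to a single scalar equation $\gamma_1\gamma_2=\alpha_\mathrm{\scalebox{0.6}{3dB}}$. Because $\gamma_1/\gamma_2$ is pinned by the geometry this reduction is internally consistent, but strictly speaking $\alpha_\mathrm{\scalebox{0.6}{3dB}}$ has a mild dependence on $\eta$ and $(\beta_1,\beta_2)$, so one would want to verify numerically that a single constant captures the 3dB locus accurately across the operating range and reduces to the ULA value used in prior near-field beamfocusing work in the degenerate limits. Once this step is accepted, everything else is routine: the Riemann approximation of the double sum leading to \eqref{eqn_IIIB_2}, the standard change of variables $x_i=\sqrt{2n_i^2 d^2\beta_i z_\mathrm{eff}/\lambda}$ into the Fresnel integrals of \eqref{eqn_IIIB_3}, and the elementary algebraic inversion of the map $z\mapsto|z-\rf|/(z\rf)$.
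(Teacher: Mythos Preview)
Your proposal is correct and follows essentially the same route as the paper: define $\alpha_\mathrm{\scalebox{0.6}{3dB}}$ as the value of the product $\gamma_1\gamma_2$ at the half-power point, rewrite this product as $\tfrac{\eta}{4(\eta^2+1)}\RD\sqrt{\beta_1\beta_2}\,z_\mathrm{eff}$ via $d^2=D^2/(N_1^2+N_2^2)$, invert $z_\mathrm{eff}=\lvert z-\rf\rvert/(z\rf)$ on either side of $\rf$, and subtract. Your explicit remark that $\alpha_\mathrm{\scalebox{0.6}{3dB}}$ carries a mild residual dependence on the fixed ratio $\gamma_1/\gamma_2=\eta\sqrt{\beta_1/\beta_2}$ is a useful clarification that the paper leaves implicit.
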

\begin{proof}
We define $\alpha_{\mathrm{\scalebox{0.6}{3dB}}} \stackrel{\Delta}{=}\left\{ \left(\gamma_1\gamma_2\right) \mid \GR\left(\gamma_1, \gamma_2 \right) = 0.5 \right\}$ Thus, $\alpha_\mathrm{\scalebox{0.6}{3dB}} = \frac{N_1 N_2 d^2}{2\lambda} \sqrt{ \beta_1 \beta_2} z_\mathrm{eff}$. Substituting $d^2 = D^2/(N_1^2 + N_2^2)$, $\alpha_\mathrm{\scalebox{0.6}{3dB}}$ can be obtained as
\begin{equation}
\begin {aligned}
\alpha_\mathrm{\scalebox{0.6}{3dB}}& = \frac{\RD N_1 N_2}{4(N_1^2 + N_2^2)}\sqrt{\beta_1 \beta_2} z_\mathrm{eff}\\
 &= \frac{\RD }{4} \left(\frac{\eta}{\eta^2+1}\right)\sqrt{ \beta_1 \beta_2} z_\mathrm{eff},
\end {aligned}
\label{eqn_IIIB_5}
\end{equation}
 where $\eta = N_1/N_2$. We can then solve for $z^{\star}$ in (\ref{eqn_IIIB_5}) to get $ z^{\star} = \frac{\rf \RD \sqrt{ \beta_1 \beta_2} \eta }{\RD \eta \sqrt{ \beta_1 \beta_2} \pm 4 \rf \alpha_\mathrm{\scalebox{0.6}{3dB}}(\eta^2+1)}$. Hence, 
\begin{subequations}
\begin{equation}
\rf^\mathrm{max} = \frac{\rf \RD \sqrt{ \beta_1 \beta_2} \eta }{\RD \eta \sqrt{ \beta_1 \beta_2} - 4 \rf \alpha_\mathrm{\scalebox{0.6}{3dB}}(\eta^2+1)}, 
\label{eqn_IIIB_6}
\end{equation}
\begin{equation}
 \rf^\mathrm{min} = \frac{\rf \RD \sqrt{ \beta_1 \beta_2} \eta }{\RD \eta \sqrt{ \beta_1 \beta_2} + 4 \rf \alpha_\mathrm{\scalebox{0.6}{3dB}}(\eta^2+1)}.
\label{eqn_IIIB_7}
\end{equation}
\end{subequations} 
The beamdepth $\BD$ is defined as the distance window between $\rf^\mathrm{max}$ and $\rf^\mathrm{min}$, where the array gain $\GR$ is less than or equal to $\unit[3]{dB}$. Hence, the beamdepth $\BD = \rf^\mathrm{max} - \rf^\mathrm{min}$ is given by the expression in (\ref{eqn_IIIB_4}), which concludes the proof. 
\end{proof}

\begin{theorem} 
The maximum range for a given angle at which near-field beamfocusing can be achieved for a \ac{URA} is referred to as the effective beamfocused Rayleigh distance and is expressed as
\begin{equation}
\begin {aligned}
\rf < \frac{\eta\RD }{4\alpha_\mathrm{\scalebox{0.6}{3dB}} (1+\eta^2) }\sin \theta \sqrt{1-\sin^2 \theta \sin^2 \varphi}
\end {aligned}
\label{eqn_IIIB_8}
\end{equation}
\label{theorem2}
\end{theorem}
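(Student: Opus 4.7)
The plan is to obtain the EBRD directly from the upper endpoint $\rf^{\mathrm{max}}$ derived in Theorem~\ref{theorem1}. The key observation is that finite-depth beamfocusing is only meaningful when $\rf^{\mathrm{max}}$ is finite and positive; once the focal distance $\rf$ grows large enough that the denominator of \eqref{eqn_IIIB_6} vanishes (and subsequently turns negative), the upper $3\,\mathrm{dB}$ endpoint escapes to infinity and the beam loses its finite depth, effectively reverting to far-field behavior. The EBRD is thus the critical value of $\rf$ at which this transition occurs.

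First, I would recall \eqref{eqn_IIIB_6} from the proof of Theorem~\ref{theorem1},
\begin{equation*}
\rf^{\mathrm{max}} = \frac{\rf\,\RD\,\sqrt{\beta_1\beta_2}\,\eta}{\RD\,\eta\,\sqrt{\beta_1\beta_2} - 4\,\rf\,\alpha_{\mathrm{\scalebox{0.6}{3dB}}}(\eta^2+1)},
\end{equation*}
and impose the finiteness condition $\RD\,\eta\,\sqrt{\beta_1\beta_2} - 4\,\rf\,\alpha_{\mathrm{\scalebox{0.6}{3dB}}}(\eta^2+1) > 0$, which is equivalent to requiring that the beamdepth $\BD$ in \eqref{eqn_IIIB_4} remain positive and bounded. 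Solving this linear inequality for $\rf$ immediately yields
\begin{equation*}
\rf < \frac{\eta\,\RD\,\sqrt{\beta_1\beta_2}}{4\,\alpha_{\mathrm{\scalebox{0.6}{3dB}}}(1+\eta^2)}.
\end{equation*}

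Next, I would substitute the directional-cosine expressions $\beta_1 = 1 - \sin^2\theta\sin^2\varphi$ and $\beta_2 = 1 - \cos^2\theta = \sin^2\theta$ introduced just after \eqref{eqn_B4}. Taking square roots, $\sqrt{\beta_1\beta_2} = \sin\theta\sqrt{1-\sin^2\theta\sin^2\varphi}$ (using $\theta\in[-\pi/2,\pi/2]$ so that $\sin\theta$ can be factored out with the appropriate sign convention), which, when inserted into the bound, reproduces exactly \eqref{eqn_IIIB_8}.

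The only subtlety, rather than a true obstacle, is justifying \emph{why} the vanishing denominator of $\rf^{\mathrm{max}}$ should be read as the physical limit of near-field beamfocusing. I would argue this by noting that as $\rf$ approaches the critical value from below, $\rf^{\mathrm{max}}\to\infty$ while $\rf^{\mathrm{min}}$ from \eqref{eqn_IIIB_7} stays bounded, so the $3\,\mathrm{dB}$ window becomes one-sided and $\BD\to\infty$; beyond the critical value the $-3\,\mathrm{dB}$ threshold is never crossed on the far side of the focus, which is precisely the far-field regime where beams are angularly confined but range-unbounded. Hence the stated inequality characterizes the maximum focal range for which a finite beamdepth — and therefore genuine range-domain multiplexing — is achievable.
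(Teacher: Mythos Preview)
Your proof is correct and follows essentially the same route as the paper. The only cosmetic difference is that the paper sets the denominator of the beamdepth expression \eqref{eqn_IIIB_4} to zero, whereas you set the denominator of $\rf^{\mathrm{max}}$ in \eqref{eqn_IIIB_6} to zero; since the former is the difference of squares $(\eta\RD\sqrt{\beta_1\beta_2}-4\alpha_{\mathrm{\scalebox{0.6}{3dB}}}\rf(\eta^2+1))(\eta\RD\sqrt{\beta_1\beta_2}+4\alpha_{\mathrm{\scalebox{0.6}{3dB}}}\rf(\eta^2+1))$ and the second factor is always positive, the two conditions are identical.
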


\begin{proof}
In \eqref{eqn_IIIB_4}, the maximum value of $\BD$ is obtained when the factor in the denominator {\small $ \left[\eta \RD \sin \theta \sqrt{1-\sin^2 \theta \sin^2 \varphi}\right]^2 -\left[4\alpha_\mathrm{\scalebox{0.6}{3dB}} \rf (\eta^2+1)\right]^2 = 0$}. Thus, the farthest angle-dependent axial distance, \(\rf\), within which finite-depth beamforming can be achieved, is given by 
\(\rf < \frac{\eta \RD}{4 \alpha_\mathrm{\scalebox{0.6}{3dB}} (1+\eta^2)} \sin \theta \sqrt{1-\sin^2 \theta \sin^2 \varphi}.\) Otherwise, for distances exceeding this limit, the beamdepth approaches infinity (\(\BD \rightarrow \infty\)).
\end{proof}



The beamdepth and \ac{EBRD} expressions for a \ac{USA} and a \ac{ULA} can be derived as follows:

\newtheorem{corollary}{Corollary}
\begin{corollary}
 For a \ac{USA}, beamdepth $\BD^{\mathrm{\scalebox{0.5}{USA}}}$ is obtained by substituting $\eta=1$ in \eqref{eqn_IIIB_4} to get
\begin{equation}
\BD^{\mathrm{\scalebox{0.5}{USA}}}= \begin{cases}\frac{16 \rf^{2} \alpha_\mathrm{\scalebox{0.6}{3dB}} \sqrt{\beta_1 \beta_2}} {{(\RD \sqrt{\beta_1 \beta_2})} ^{2}- (8 \alpha_\mathrm{\scalebox{0.6}{3dB}} \rf)^{2}}, & \rf<\frac{\RD}{8 \alpha_\mathrm{3dB}} \sqrt{\beta_1 \beta_2}, \\ \infty, & \rf \geq \frac{\RD} {8 \alpha_\mathrm{\scalebox{0.6}{3dB}}} \sqrt{\beta_1 \beta_2}.\end{cases}
\label{eqn_IIIC_1}
\end{equation}
 Note that in the boresight case $ \sqrt{\beta_1 \beta_2} =1$ and $ \alpha_\mathrm{\scalebox{0.6}{3dB}} = 1.25$, then \eqref{eqn_IIIC_1} reduces to the beamdepth expression in [\cite{bjornson2021primer}, Eq. 23].
\end{corollary}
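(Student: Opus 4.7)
The plan is to specialize Theorem \ref{theorem1} to the $\eta=1$ case, read off the two-branch structure from Theorem \ref{theorem2}, and then verify the broadside reduction to the cited Björnson formula. Because both underlying theorems are already available, the derivation is a matter of direct substitution and one trigonometric identity, with no new estimation or approximation required.

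First I would substitute $\eta=1$ into the beamdepth expression (\ref{eqn_IIIB_4}). The prefactor $\eta(\eta^2+1)$ collapses to $2$, the term $[\eta\RD\sqrt{\beta_1\beta_2}]^2$ simplifies to $(\RD\sqrt{\beta_1\beta_2})^2$, and $[4\alpha_{\mathrm{3dB}}\rf(\eta^2+1)]^2$ becomes $(8\alpha_{\mathrm{3dB}}\rf)^2$. Collecting the constants in the numerator yields the rational form stated on the upper branch of (\ref{eqn_IIIC_1}). To pin down where this branch is valid, I would invoke Theorem \ref{theorem2} at $\eta=1$. Using the identity $\sqrt{\beta_1\beta_2}=\sin\theta\sqrt{1-\sin^2\theta\sin^2\varphi}$, which follows immediately from $\beta_1=1-\sin^2\theta\sin^2\varphi$ and $\beta_2=1-\cos^2\theta=\sin^2\theta$, the EBRD bound (\ref{eqn_IIIB_8}) becomes $\rf<\RD\sqrt{\beta_1\beta_2}/(8\alpha_{\mathrm{3dB}})$. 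This is precisely the root of the denominator of the specialized beamdepth: below it the denominator is positive and the $3\,$dB interval is finite, while at or above it the denominator is non-positive and $\BD\to\infty$, producing the lower branch of the corollary.

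For the consistency check with \cite{bjornson2021primer}, I would set $\theta=\pi/2$ and $\varphi=0$, so that $\beta_1=\beta_2=1$, and use the stated Fresnel-integral level $\alpha_{\mathrm{3dB}}=1.25$. Substituting these values into the finite branch reduces the expression to a closed form in $\rf$ and $\RD$ alone, which is then matched term by term with Eq.~23 of \cite{bjornson2021primer}. The main obstacle is purely bookkeeping, namely ensuring that the algebraic collapse at $\eta=1$ aligns exactly with the specialization of Theorem \ref{theorem2}, so that the case split in (\ref{eqn_IIIC_1}) is seen to arise from the same vanishing denominator in both derivations; beyond this, no further analytical work is needed.
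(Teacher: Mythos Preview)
Your proposal is correct and matches the paper's approach: the corollary is stated (and intended) as an immediate specialization of Theorem~\ref{theorem1} at $\eta=1$, with the case split read off from Theorem~\ref{theorem2} exactly as you describe via the identity $\sqrt{\beta_1\beta_2}=\sin\theta\sqrt{1-\sin^2\theta\sin^2\varphi}$. The paper offers no additional argument beyond this direct substitution, so your more explicit bookkeeping (and the boresight check against \cite{bjornson2021primer}) already covers everything the paper does.
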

 \begin{corollary}
 For a \ac{ULA} along y-axis, beamdepth $\BD^{\mathrm{\scalebox{0.5}{ULA}}}$ is given by 
\begin{equation} 
\BD^{\mathrm{\scalebox{0.5}{ULA}}} = \begin{cases}\frac{8 \alpha_\mathrm{\scalebox{0.6}{3dB}} \rf^2{\RD} \cos^2{(\varphi)}} {( \RD \cos^2{(\varphi))^2 -(4\alpha_\mathrm{\scalebox{0.6}{3dB}} \rf )^2}},& \rf<\frac{\RD}{4 \alpha_\mathrm{\scalebox{0.6}{3dB}}} \cos^2{(\varphi)}, \\ \infty, & \rf \geq \frac{\RD} {4 \alpha_\mathrm{\scalebox{0.6}{3dB}}}\cos^2{(\varphi)}. \end{cases}
\label{eqn_IIIC_2}
\end{equation}
\begin{proof}
 Ref [\cite{10934779}, Eq. 3].
\end{proof}
\end{corollary}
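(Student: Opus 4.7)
The plan is to mirror the derivation of Theorem 1 restricted to a one-dimensional aperture along the y-axis, since for a \ac{ULA} only the index $n_1$ is active and the quadratic near-field phase retains only the $\beta_1 = 1 - \sin^2\theta \sin^2\varphi$ contribution. First, I would write the normalized array gain as
\[
\mathcal{G}_{\mathrm{\scalebox{0.5}{ULA}}} \approx \left| \tfrac{1}{N_1} \int_{-N_1/2}^{N_1/2} e^{j \pi n_1^2 d^2 \beta_1 z_\mathrm{eff}/\lambda} \, dn_1 \right|^2,
\]
by replacing the double Riemann sum in (\ref{eqn_IIIB_2}) with a single integral, and then apply the substitution $x_1 = n_1 d \sqrt{2\beta_1 z_\mathrm{eff}/\lambda}$ to recast it as $\mathcal{G}_{\mathrm{\scalebox{0.5}{ULA}}} = [C^2(\gamma_1)+S^2(\gamma_1)]/\gamma_1^2$, with $\gamma_1 = \sqrt{N_1^2 d^2 \beta_1 z_\mathrm{eff}/(2\lambda)}$.

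Next, I would adopt the same convention as in Theorem 1 and define $\alpha_\mathrm{\scalebox{0.6}{3dB}} \stackrel{\Delta}{=} \left\{ \gamma_1^2 \mid \mathcal{G}_{\mathrm{\scalebox{0.5}{ULA}}} = 0.5 \right\}$, which plays the role of the product $\gamma_1 \gamma_2$ from the two-dimensional case. Using the \ac{ULA} aperture $D = d N_1$ and hence $\RD = 2 N_1^2 d^2/\lambda$, this yields the compact 3dB condition $\alpha_\mathrm{\scalebox{0.6}{3dB}} = \RD \beta_1 z_\mathrm{eff}/4$, which inverts to $z_\mathrm{eff} = 4\alpha_\mathrm{\scalebox{0.6}{3dB}}/(\RD \beta_1)$.

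Paralleling (\ref{eqn_IIIB_6})--(\ref{eqn_IIIB_7}), I would then solve $|z - \rf|/(z \rf) = z_\mathrm{eff}$ under the two sign choices to recover $\rf^\mathrm{max}$ and $\rf^\mathrm{min}$, specializing $\beta_1$ to $\cos^2\varphi$ for a user in the xy-plane ($\theta = \pi/2$). Subtracting the two expressions and clearing the common denominator yields (\ref{eqn_IIIC_2}). The piecewise structure arises naturally: the denominator $(\RD \cos^2\varphi)^2 - (4\alpha_\mathrm{\scalebox{0.6}{3dB}} \rf)^2$ vanishes precisely at $\rf = \RD\cos^2\varphi/(4\alpha_\mathrm{\scalebox{0.6}{3dB}})$, beyond which $\rf^\mathrm{max}$ becomes unbounded and finite-depth beamfocusing is no longer possible, so $\BD^{\mathrm{\scalebox{0.5}{ULA}}} \to \infty$. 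This threshold is the one-dimensional analogue of the \ac{EBRD} of Theorem 2.

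The hard part will be bookkeeping around the $\alpha_\mathrm{\scalebox{0.6}{3dB}}$ convention: if one naively identifies $\alpha_\mathrm{\scalebox{0.6}{3dB}}$ with the 3dB value of $\gamma_1$ itself, (\ref{eqn_IIIC_2}) would carry $\alpha_\mathrm{\scalebox{0.6}{3dB}}^2$ inside the denominator and a square root in the numerator rather than the clean linear dependence shown there. Aligning with the two-dimensional derivation therefore requires identifying $\alpha_\mathrm{\scalebox{0.6}{3dB}}$ with $\gamma_1^2$, the natural one-dimensional analogue of the product $\gamma_1\gamma_2$, so that the \ac{URA}, \ac{USA}, and \ac{ULA} expressions all share one numerical constant and one functional form.
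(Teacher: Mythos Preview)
Your derivation is correct and, in fact, considerably more than what the paper itself supplies: the paper's entire ``proof'' of this corollary is a bare citation to an external reference (``Ref [\cite{10934779}], Eq.~3''), with no derivation given. Your approach---specializing the machinery of Theorem~1 to a single-index sum, passing to the one-variable Fresnel form $[C^2(\gamma_1)+S^2(\gamma_1)]/\gamma_1^2$, and then repeating the $z_\mathrm{eff}$ inversion and subtraction of \eqref{eqn_IIIB_6}--\eqref{eqn_IIIB_7}---is exactly the natural in-house route and yields \eqref{eqn_IIIC_2} cleanly. Your observation about the $\alpha_\mathrm{\scalebox{0.6}{3dB}}$ convention is on point: identifying it with $\gamma_1^2$ (the one-dimensional counterpart of the product $\gamma_1\gamma_2$) is precisely what makes the \ac{ULA} formula share the same algebraic skeleton as the \ac{URA} and \ac{USA} expressions, and it is worth noting explicitly that the \emph{numerical} value of $\alpha_\mathrm{\scalebox{0.6}{3dB}}$ differs between the one- and two-dimensional gain profiles even though the symbol is reused. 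What your self-contained argument buys over the paper's citation is that the \ac{ULA} result is now visibly a limiting case of Theorem~1 rather than an imported fact, which strengthens the internal coherence of Section~\ref{sec-3a}.
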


\begin{figure}[t!]
\centering
\includegraphics[width = 1\linewidth]{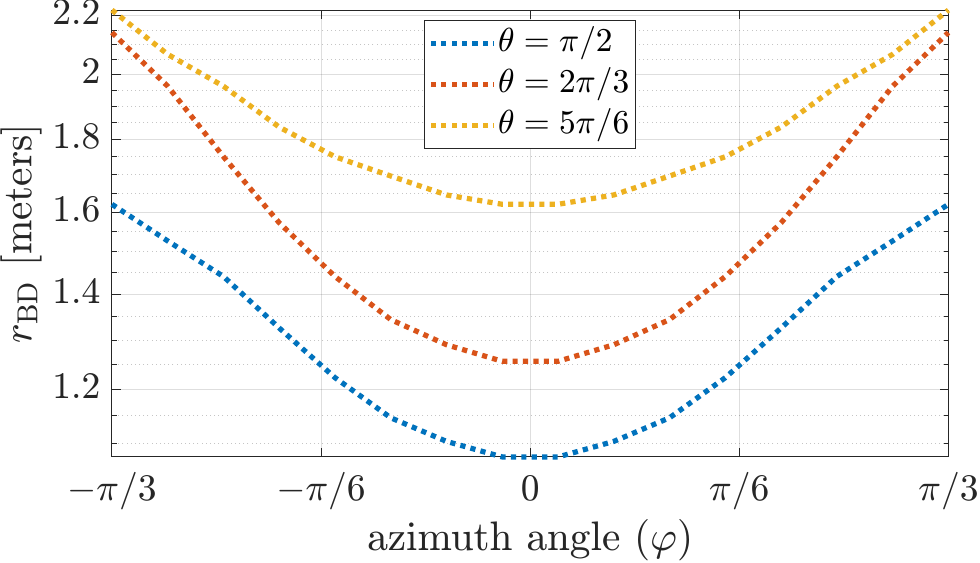}
\caption{The beamdepth of a square URA with respect to azimuth angle where $\NBS=256$, \text{and} $f_c= \unit[28]{GHz}$.}
\label{fig5_BD_vs_angles}
\end{figure}

\subsection{Impact of Spatial Focus Region and Array Geometry} \label{sec-3b}
Beamdepth plays a crucial role in enhancing spatial multiplexing gains within the \ac{EBRD} region. A narrow beamdepth in the near-field minimizes mutual interference, allowing more users to be served in the same angular direction. In the subsequent analysis, we explore how the spatial focus region and array geometry influence beamdepth and \ac{EBRD}.

\subsubsection{Spatial Focus Region} 
As the beam deviates from the boresight, the angular beamwidth expands. Similarly, beamdepth $\BD$ increases as the spatial focus region moves further from the array, exhibiting a quadratic dependence on the focus distance $\rf$. In the angular domain, beamdepth attains its minimum at the boresight ($\varphi = 0, \theta = \pi/2$) and progressively enlarges as the beam direction approaches the endfire directions ($\varphi = \pi/2, \theta = \pi$). For wide arrays ($\eta \gg 1$), the azimuth angle $\varphi$ predominantly influences beamdepth, whereas for taller arrays ($\eta \ll 1$), the elevation angle $\theta$ plays a more significant role. In general, beamdepth increases as either the azimuth or elevation angle deviates from the boresight. For instance, Fig. \ref{fig5_BD_vs_angles} illustrates the variation in beamdepth with respect to the azimuth angle $\varphi$ for a \ac{USA}, demonstrating an increase in $\BD$ with larger azimuth angles. Likewise, as the elevation angle $\theta$ transitions from the boresight at $\pi/2$ toward the endfire direction at $5\pi/6$, beamdepth $\BD$ also increases.

\subsubsection{Array Geometry} We aim to determine the optimal width-to-height ratio $\eta$ for a \ac{URA} that minimizes the beamdepth $\BD$ in \eqref{eqn_IIIB_4}, given a fixed number of antenna elements $\NBS$. The parameter $\eta$ not only appears explicitly in \eqref{eqn_IIIB_4} but also influences other critical factors, such as the Rayleigh distance $\RD$ and the coefficient $\alpha_\mathrm{\scalebox{0.6}{3dB}}$.

For a fixed number of antennas $\NBS$, the Rayleigh distance $\RD = \frac{2D^2}{\lambda}$ increases as the URA becomes more elongated, either in height or width, due to the corresponding increase in aperture length $D$. We simulate a range of tall and wide array configurations by varying $\eta$ from $2^{-6}$ to $2^{6}$, as illustrated in Fig. \ref{fig7_BD_Array}. The results indicate that the Rayleigh distance $\RD$ reaches its minimum when $\eta = 1$, corresponding to a square \ac{URA}.

Furthermore, the factor $\alpha_\mathrm{\scalebox{0.6}{3dB}}$, as defined in (\ref{eqn_IIIB_5}), is influenced by $\eta$. The gain function $\GR$ in \eqref{eqn_IIIB_3} exhibits symmetry with respect to $\eta$, satisfying the relation $\GR(\eta) = \GR(1 / \eta)$. Consequently, $\alpha_\mathrm{\scalebox{0.6}{3dB}}$ is also symmetric around $\eta = 1$, as illustrated in Fig. \ref{fig7_BD_Array}. Additionally, the dependence of beamdepth $\BD$ on $\eta$, given in \eqref{eqn_IIIB_4}, can be approximated by $\frac{\eta^2 + 1}{\eta}$, which is also depicted in Fig. \ref{fig7_BD_Array}. Among different array configurations, the combined factor $\alpha_\mathrm{\scalebox{0.6}{3dB}}\left(\frac{\eta^2 + 1}{\eta}\right)$ reaches its highest value for a square \ac{URA} compared to more elongated structures.

While the combined factor $\alpha_\mathrm{\scalebox{0.6}{3dB}}\left(\frac{\eta^2 + 1}{\eta}\right)$ is maximized for a square \ac{URA}, the Rayleigh distance $\RD$ is at its lowest in this configuration. Among these two factors, $\RD$ has a greater influence on beamdepth $\BD$ due to its larger magnitude. As a result, the beamdepth $\BD$ is minimized for a square \ac{URA} and becomes larger for other array geometries.

\begin{figure}[t!]
\centering
\includegraphics[width = 1\linewidth]{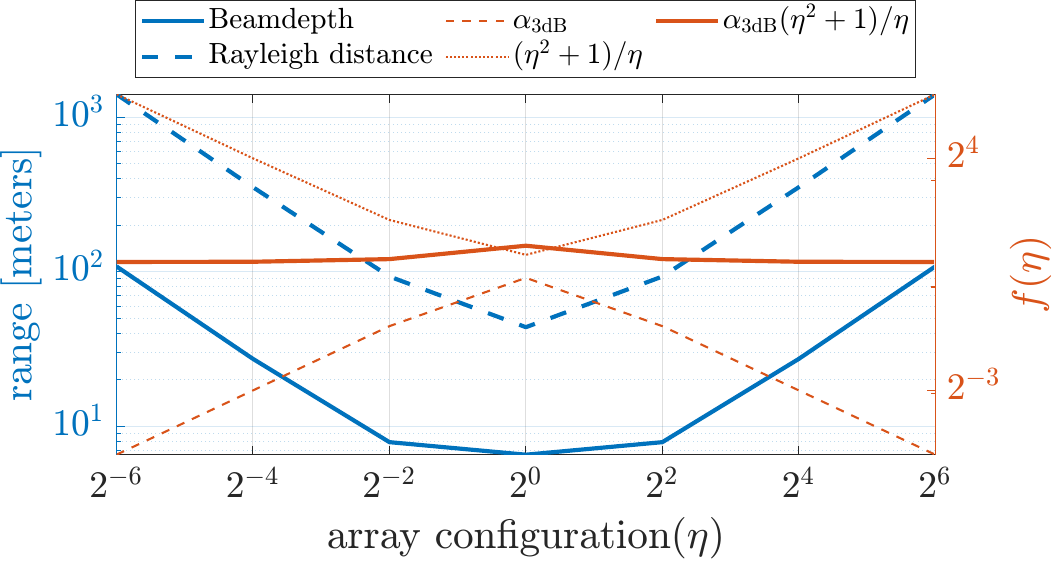}
\caption{Left axis: Variation of beamdepth and Rayleigh distance as a function of \(\eta\). Right axis: The factor \(\alpha_\mathrm{3dB}\left(\frac{\eta^2 + 1}{\eta}\right)\), derived from \eqref{eqn_IIIB_4}, plotted against \(\eta\). We set $\NBS=4096$, \text{and} $f_c= \unit[28]{GHz}$.}
\label{fig7_BD_Array}
\end{figure}

\subsubsection{Limits of Beamfocusing}
The \ac{EBRD} represents the boundary beyond which near-field beamfocusing is no longer achievable. As formulated in \eqref{eqn_IIIB_8}, its value is determined by the Rayleigh distance $\RD$ and the focus direction. For a given array, increasing the aperture size $D$—either by adding more antenna elements $\NBS$ or enlarging the inter-element spacing $d$—results in an extended \ac{EBRD}. Moreover, deploying widely spaced sub-arrays offers a means to achieve a large \ac{EBRD} without requiring a significant increase in the number of antennas \cite{kosasih2024achieving}.

Since a square \ac{URA} yields the smallest Rayleigh distance $\RD$, it also results in the minimum \ac{EBRD}, which increases as the array shape transitions toward a \ac{ULA}. This pattern is illustrated in Fig. \ref{fig8_EBRD}, where the \ac{EBRD} is plotted against the azimuth angle $\varphi$ for different values of $\eta$ and elevation angles $\theta$. For wide arrays, variations in the \ac{EBRD} are primarily influenced by the azimuth angle, whereas for tall arrays, the elevation angle plays a dominant role. Regardless of the configuration, the \ac{EBRD} reaches its maximum at the boresight direction ($\varphi = 0, \theta = \pi/2$) and decreases toward the endfire directions.

Among various array configurations, the maximum \ac{EBRD} occurs when $\eta = 0.004$, whereas the minimum \ac{EBRD} is observed for a square \ac{URA} ($\eta = 1$). For tall \acp{URA} with $\eta \ll 1$, such as $\eta = 0.016$ and $\eta = 0.004$, the \ac{EBRD} exhibits minimal variation with respect to the azimuth angle $\varphi$. However, as the elevation angle shifts away from the boresight ($\theta = \pi/2$), the \ac{EBRD} progressively decreases.

To summarize, the square \ac{URA} yields the smallest beamdepth $\BD$, which facilitates superior spatial multiplexing capabilities. Nevertheless, its effective near-field coverage, as determined by the \ac{EBRD}, is very limited. On the other hand, wide or tall \ac{URA} configurations offer a larger \ac{EBRD}, albeit at the expense of increased beamdepth relative to the square configuration. A detailed numerical evaluation of the multiuser capacity is presented in Section \ref{sec-4} to determine the array geometry that achieves the highest sum rate.

\begin{figure}[t!]
\centering
\includegraphics[width = 1\linewidth]{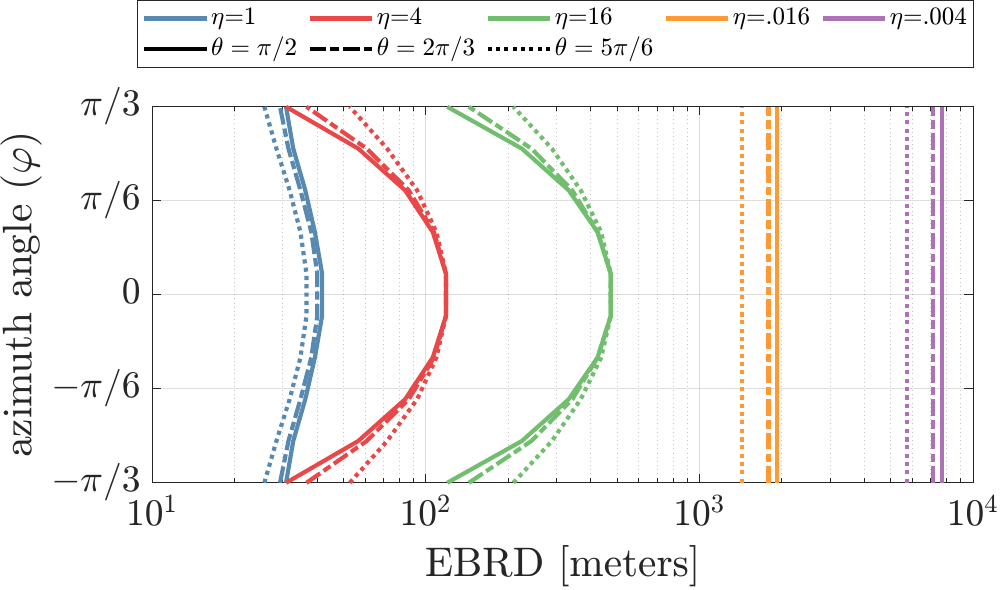}
\caption{EBRD as a function of azimuth angle, where $\eta = 1$ represents a square \ac{URA}, $\eta =\{4, 16\}$ correspond to wide \acp{URA}, $\eta = 0.016$ denotes a very tall \ac{URA}, and $\eta = 0.004$ represents a \ac{ULA}. We set $\NBS = 4096$ and $f_c = \unit[28]{GHz}$.}
\label{fig8_EBRD}
\end{figure}

\section{Numerical Results} \label{sec-4}
We conduct Monte Carlo simulations to evaluate the achievable sum rate across different array configurations. The simulation setup considers a downlink communication system, where a \ac{BS} serves $\MUE$ users. The \ac{BS} employs $\NBS$ antenna elements and $\NRF=4$ RF chains, utilizing hybrid precoding. To estimate the channel, the \ac{BS} performs beam training through beam sweeping. Each \ac{UE} reports the index of the codeword that provides the highest gain, enabling the \ac{BS} to select the optimal codeword from the polar codebook for analog precoding, represented by $\mathbf{W}$. Finally, the \ac{BS} applies zero-forcing for digital precoding. We compute the overall \ac{SE} using the following expression
\begin{equation}
R = \sum_m \log_2 \left( 1 + \frac{ p_m \left| \mathbf{h}_{m}^{\mathsf{H}} \mathbf{W} \mathbf{f}_{m} \right|^2}{\sigma_m^2 + \sum_{l \neq m} p_l\left| \mathbf{h}_{m}^{\mathsf{H}} \mathbf{W} \mathbf{f}_{l} \right|^2} \right),
\label{eqn34}
\end{equation}
where $p_m$ denotes power allocated to the $\nth{m}$ user and $\sigma_m^2$ denotes the noise variance. Furthermore, $f_m$ represents the $\nth{m}$ column of the digital precoder.

The finite beamdepth remains achievable only within the \ac{EBRD} region, constraining the spatial multiplexing gains in the near-field domain defined by the \ac{EBRD}. To illustrate this effect, we examine a $64 \times 8$ \ac{URA} serving five users positioned along the boresight direction $(\varphi = 0, \theta = \pi/2)$. These users are distributed across three distinct regions: the \ac{EBRD} region $[1.2D, \EBRD]$, the extended near-field beyond \ac{EBRD} $[\EBRD, \RD]$, and the far-field region $[\RD, 100\RD]$. We evaluate performance using both polar and \ac{DFT} codebooks across these regions. In the polar codebook, codewords are sampled across the EBRD region as described in \cite{10988573}, whereas in the DFT codebook, codewords are uniformly sampled across the angular region. 
The spectral efficiency results, presented in Fig. \ref{fig15_SE_vs_EBRD}, show that within the \ac{EBRD} region, the polar codebook achieves a spectral efficiency of $5.8$ bps/Hz, significantly surpassing the \ac{DFT} codebook, which only reaches $2$ bps/Hz. However, beyond the \ac{EBRD}, in both the extended near-field and far-field regions, the spectral efficiency of both codebooks remains comparable. These findings highlight the \ac{EBRD} as a critical threshold distinguishing the radiative near-field from the far-field in terms of achievable multiplexing gains.

The array geometry plays a crucial role in determining beamdepth and the boundaries of the \ac{EBRD}. A square \ac{URA} $(\eta = 1)$ achieves the smallest beamdepth, which enhances spatial multiplexing. However, this configuration also results in a more constrained \ac{EBRD}, potentially limiting overall system capacity when compared to wider or taller array structures. To demonstrate this effect, we analyze two configurations: a square \ac{URA} with dimensions $[32 \times 32]$ $(\eta = 1)$ and a wide \ac{URA} with dimensions $[128 \times 8]$ $(\eta = 16)$. Both arrays contain the same number of antenna elements, $\NBS = 1024$, and serve six users aligned along the boresight direction. Fig. \ref{fig16_SE_vs_eta} presents the \ac{SE} as a function of \ac{SNR} for these configurations. The results indicate that the wide \ac{URA} $(\eta = 16)$ consistently outperforms the square \ac{URA} $(\eta = 1)$ across all \ac{SNR} levels, demonstrating that wider array designs offer enhanced \ac{SE}. At high \ac{SNR}, the wide \ac{URA} attains a \ac{SE} of $\unit[8.2]{bps/Hz}$, significantly outperforming the square \ac{URA}, which achieves only $\unit[2.3]{bps/Hz}$.

\begin{figure}[t!]
\centering
\includegraphics[width = 1\linewidth]{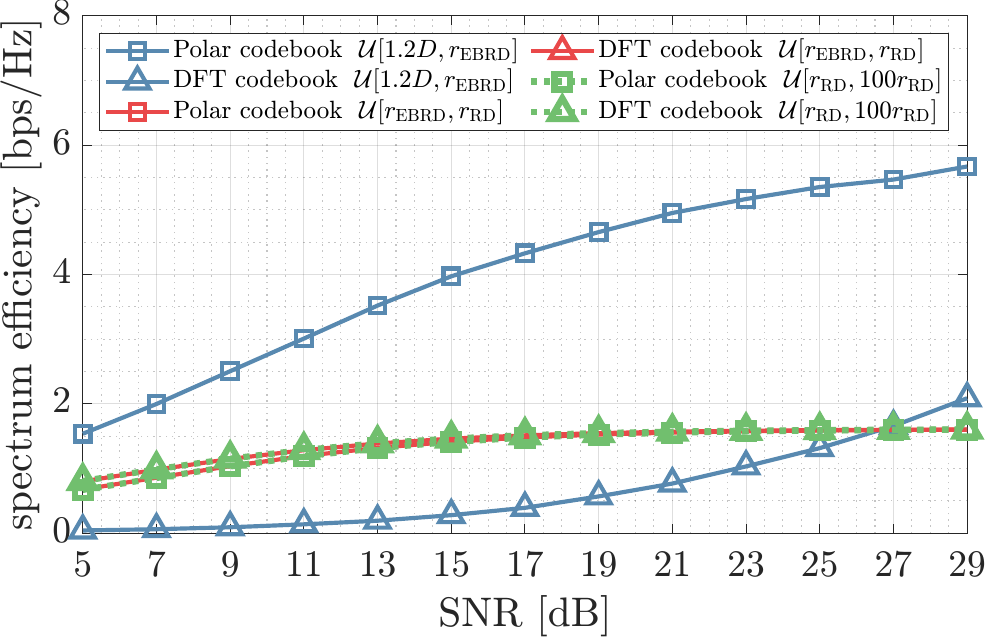}
\caption{ Spectral efficiency versus SNR using polar codebook and DFT codebooks.}
\label{fig15_SE_vs_EBRD}
\end{figure}

\begin{figure}[t]
\centering
\includegraphics[width = 1\linewidth]{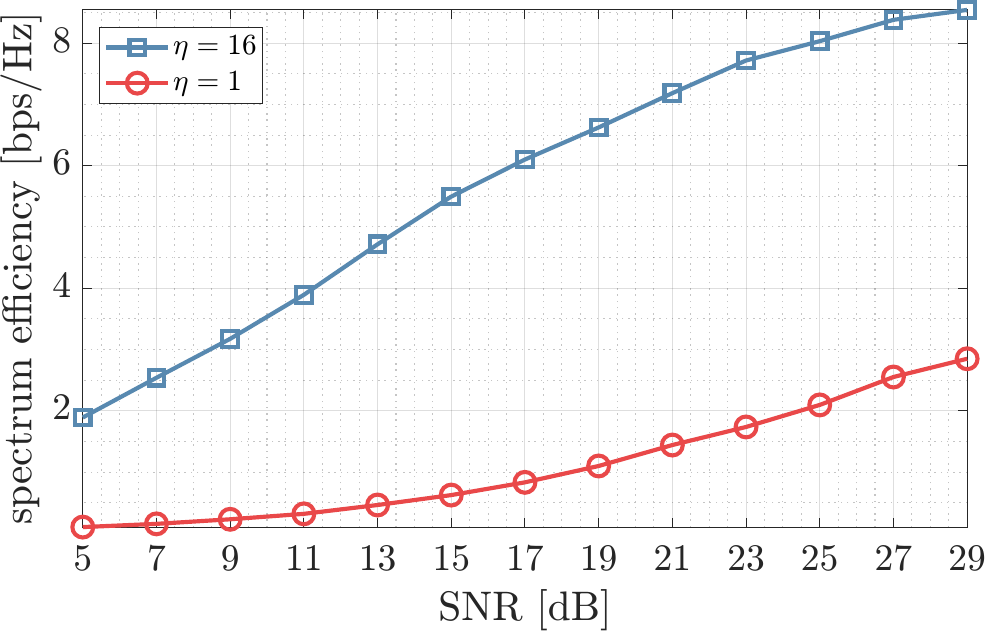}
\caption{ Spectral efficiency versus SNR for different URA configurations.}
\label{fig16_SE_vs_eta}
\end{figure}

\section{Conclusion} \label{sec-5}
In this paper, we have analyzed the beamdepth and \ac{EBRD} for a generalized \ac{URA}. Specifically, we have derived the beamdepth by evaluating the normalized array gain at the $\unit[3]{dB}$ points and introduced the \ac{EBRD} to define the near-field region where beamfocusing is achievable. Our analysis has demonstrated that the beamdepth was narrowest at boresight and increased at off-boresight angles, while the \ac{EBRD} was maximized at boresight and decreased at off-boresight angles. To assess the impact of array geometry, we have shown that the beamdepth is minimized for a square \ac{URA}, whereas the \ac{EBRD} is maximized for wide or tall \ac{URA} configurations.

\bibliographystyle{IEEEtran}
\bibliography{Bibliography/IEEEabrv,Bibliography/my2bib}
\end{document}